\theoremstyle{plain}
\newtheorem{theorem}{Theorem}[section]
\newtheorem{lemma}[theorem]{Lemma}
\theoremstyle{definition}
\newtheorem{definition}[theorem]{Definition}
\newtheorem{example}[theorem]{Example}
\theoremstyle{remark}
\newcommand{\R}{\mathbb{R}}
\newcommand{\system}{\mathcal{S}}
\begin{document}


\title{Novel Stability Criteria for Discrete and Hybrid Systems via Ramanujan Inner Products}

\author{
\name{Shyam Kamal\textsuperscript{a}\thanks{Shyam Kamal. Email: shyamkamal.eee@iitbhu.ac.in}, Sunidhi Pandey\textsuperscript{a}\thanks{Sunidhi Pandey. Email: sunidhipandey.rs.eee20@itbhu.ac.in}, Thach Ngoc Dinh\textsuperscript{b}\thanks{Thach Ngoc Dinh. Email: ngoc-thach.dinh@lecnam.net}, and Cao Thanh Tinh\textsuperscript{c,d}\thanks{Cao Thanh Tinh. Email: tinhct@uit.edu.vn}}
\affil{\textsuperscript{a}Department of Electrical Engineering, IIT(BHU) Varanasi, UP, India; \textsuperscript{b}Cedric-Laetitia, Conservatoire National des Arts et Métiers, 292 rue St-Martin, Paris Cedex 03, 75141, France; \textsuperscript{c}Vietnam National University, Ho Chi Minh City, Vietnam;
\textsuperscript{d}Department of Mathematics and Physics, University of Information Technology, Ho Chi Minh City, Vietnam}
}

\maketitle

\begin{abstract}
This paper introduces a Ramanujan inner product and its corresponding norm, establishing a novel framework for the stability analysis of hybrid and discrete-time systems as an alternative to traditional Euclidean metrics. We establish new $\epsilon$-$\delta$ stability conditions that utilize the unique properties of Ramanujan summations and their relationship with number-theoretic concepts. The proposed approach provides enhanced robustness guarantees and reveals fundamental connections between system stability and arithmetic properties of the system dynamics. Theoretical results are rigorously proven, and simulation results on numerical examples are presented to validate the efficacy of the proposed approach.
\end{abstract}

\begin{keywords}
Stability analysis, hybrid systems, Ramanujan inner products, discrete-time systems.
\end{keywords}

\section{Introduction}
\label{sec:introduction}

Stability analysis is a cornerstone of control theory, with Lyapunov's direct method, input-to-state stability (ISS), and small-gain theorems serving as fundamental tools for characterizing system behavior \cite{khalil2002nonlinear,goebel2009hybrid,Zhai2001}. Traditional stability criteria \cite{Xu2008,Herrmann2006} predominantly employ Euclidean inner products and norms, which measure energy and convergence in a purely geometric sense. While powerful, these metrics may overlook structural properties of signals and dynamics that arise in modern control systems, especially those involving hybrid switching, sampled-data control, or discrete scheduling mechanisms. In such systems, the underlying state evolution may possess arithmetic or combinatorial characteristics, such as periodic switching sequences or residue-class-dependent dynamics, which are not optimally captured by Euclidean norms.

Motivated by this observation, we propose a novel framework for stability analysis based on Ramanujan inner products and associated Ramanujan norms. Ramanujan sums $c_q(n)$, introduced by Srinivasa Ramanujan in 1918 \cite{hardy1918asymptotic}, are arithmetic sums of exponential functions that capture divisibility and residue-class information. These sums have been widely studied in number theory and have found applications in signal processing \cite{planat2002ramanujan,apostol2013introduction}, where they are used to reveal hidden periodicities and quasi-periodic structures in integer-indexed data. The key idea of this paper is to leverage Ramanujan sums to construct a new inner product space for system trajectories, thereby defining a norm that is sensitive to arithmetic structure.

This perspective is particularly relevant for hybrid and discrete-time systems, where the behavior of solutions is strongly influenced by discrete jumps and sampling instants. By employing a Ramanujan-inspired inner product, we obtain stability criteria that are tuned to periodic, almost-periodic, and arithmetic progression-based dynamics. Our results include the Ramanujan stability criterion for discrete-time and hybrid systems, giving small-gain-like conditions expressed in the Ramanujan metric.

The contributions of this paper are threefold: (a) we introduce the Ramanujan inner product and associated norm into the control-theoretic context, (b) we develop Lyapunov-based and ISS-like stability conditions using this norm, and (iii) we demonstrate that systems that are Ramanujan stable enjoy enhanced robustness against structured disturbances that exhibit arithmetic patterns, such as periodic packet drops or cyclostationary noise.

This paper is organized as follows. Section \ref{sec:prelim} presents preliminaries. Section \ref{sec:ramin} establishes notation for Ramanujan inner products and norms. Section \ref{sec:stbdis} presents Ramanujan stability criterion for discrete-time nonlinear systems with structured arithmetic disturbances. Section \ref{sec:hybrid_framework} extends these results to hybrid systems and also establishes Ramanujan-Lyapunov stability conditions for hybrid systems. Section \ref{nex} is dedicated to the validation of the proposed formulation using simulation results. Finally, the conclusion of this work is stated.
\section{Mathematical Preliminaries}
\label{sec:prelim}
\subsection{Classical Stability Notions}
\subsubsection{Stability of Discrete-Time Nonlinear Systems}
Consider the following nonlinear discrete-time system 
\begin{equation} \label{dsys}
x_{k+1} = f(x_k,u_k), \qquad k \in \mathbb{N}
\end{equation}
where $x_k \in \mathcal{X} \subseteq \mathbb{R}^n$ is the state vector, $u_k \in \mathcal{U} \subseteq \mathbb{R}^m$ is the input, and $f:\mathcal{X} \times \mathcal{U} \to \mathcal{X}$ is a nonlinear state transition map assumed to be locally Lipschitz. The equilibrium point be $x^\star = 0$, satisfying $f(0,0) = 0$.
\begin{definition} \cite{haddad2011nonlinear}
   Consider the dynamics \eqref{dsys}. The zero solution is said to be stable if for every $\epsilon > 0$, there exists a $\delta > 0$ such that $\|x_0\|_2 < \delta \quad \Rightarrow \quad \|x_k\|_2 < \epsilon, \qquad \forall k \in \mathbb{N}.$ If, in addition to being stable, the state trajectory converges to the equilibrium as $k \to \infty$, i.e., $\lim_{k \to \infty} x_k = 0,$ then the equilibrium point is said to be asymptotically stable. 
\end{definition}
\subsubsection{Stability of Nonlinear Hybrid Systems}
Consider the following nonlinear hybrid system
\begin{equation} \label{hbdsys}
\system: 
\dot{x} = f(x,u), ~ x \in C, \qquad
x^{+} = g(x,u), ~ x \in D
\end{equation}
where $x \in \mathcal{X} \subseteq \mathbb{R}^n$ denotes the system state and $u \in \mathcal{U} \subseteq \mathbb{R}^m$ represents the input. The set $C \subseteq \mathcal{X}$ is called the flow set and characterizes the states where the system evolves according to the continuous-time dynamics governed by the locally Lipschitz function $f:\mathcal{X} \times \mathcal{U} \to \mathbb{R}^n$. The set $D \subseteq \mathcal{X}$ is the jump set which specifies the states where a discrete transition occurs, with the jump map $g:\mathcal{X} \times \mathcal{U} \to \mathcal{X}$ determining the state immediately after the jump. Solutions of the hybrid system are defined on a hybrid time domain $\mathcal{T} \subset \mathbb{R}_{\geq 0} \times \mathbb{N}$, where each solution trajectory $x(t,j)$ is parameterized by the pair $(t,j)$. Here $t$ represents the usual continuous time during flows, while $j$ counts the number of discrete jumps that have occurred up to time $t$. Here, The classical notion of stability is defined with respect to an equilibrium point $x^\star = 0$, and is expressed using the Euclidean norm.

\begin{definition}\cite{goebel2009hybrid}
Consider the hybrid system \eqref{hbdsys}. The equilibrium point $x^\star = 0$ is said to be stable if for every $\epsilon > 0$, there exists a $\delta > 0$ such that $\|x(0,0)\|_2 < \delta \quad \Rightarrow \quad \|x(t,j)\|_2 < \epsilon, \quad \forall (t,j) \in \mathcal{T}.$ 
\end{definition}

This definition naturally extends the well-known concept of Lyapunov stability from continuous-time systems to hybrid systems by explicitly considering the hybrid time domain. If, in addition, the solution satisfies $\lim_{(t,j)\to\infty} x(t,j) = 0$, the equilibrium is said to be asymptotically stable. 
\section{Ramanujan Inner Product} \label{sec:ramin}
In this section, we define the Ramanujan inner product and associated norm based on Ramanujan's sum.
\begin{definition}[Ramanujan Sum]  For a positive integer $q$ and an integer $n$, the Ramanujan sum is defined as
\begin{equation}
c_q(n) = \sum_{\substack{k=1 \\ (k,q)=1}}^{q} e^{2\pi i \frac{k}{q} n}
\end{equation}
where $(k,q)$ denotes the greatest common divisor (gcd) of $k$ and $q$. In words, the sum is taken over all integers $k$ between $1$ and $q$ that are coprime to $q$. The function $c_q(n)$ is periodic in $n$ with period $q$, and it is always a real-valued number despite being expressed as a sum of complex exponentials.
\end{definition}
\begin{definition}\cite{ramanujan1918certain}[Ramanujan Inner Product] \label{definn}
    Let $a=\{a_n\}$ and $b=\{b_n\}$ be two sequences in $\ell^2(\mathbb{Z})$, i.e., $\sum_{n=-\infty}^{\infty} |a_n|^2 < \infty$ and $\sum_{n=-\infty}^{\infty} |b_n|^2 < \infty$. The Ramanujan inner product between $a$ and $b$ is defined as 
    \begin{align}
        \langle a, b \rangle_{\mathrm{R}}
= \lim_{Q \to \infty} \frac{1}{Q} 
\sum_{q=1}^{Q} \frac{1}{q} 
\sum_{r=0}^{q-1} c_q(r)\sum_{\substack{n \in \mathbb{Z} \\ n \equiv r \, (\mathrm{mod}\, q)}} 
a_n \, \overline{b_n}.
    \end{align}
    where $c_q(r)$ is the classical Ramanujan sum.
\end{definition}
This definition groups the terms of each sequence according to their residue class modulo $q,$ then weights them with the arithmetic kernel $c_q(r)$. In this way, the inner product performs a number-theoretic correlation, analogous to Fourier inner products but aligned with arithmetic periodicities rather than trigonometric frequencies. 

The associated Ramanujan norm of a sequence $a$ is defined in the usual way by
\begin{equation}
\|a\|_{\mathrm{R}} = \sqrt{\langle a, a \rangle_{\mathrm{R}}},
\end{equation}
which provides a measure of the ``energy" of the sequence in the Ramanujan sense. 
\subsection{Properties of the Ramanujan Norm}
\begin{theorem}
Let $\mathcal S$ be the class of complex sequences $a=(a_n)_{n\in\mathbb Z}$ that admit a Ramanujan expansion
$a(n) = \sum_{d\ge 1} \alpha_d\,c_d(n)$,
with coefficient sequence $(\alpha_d)_{d\ge1}$ satisfying the weighted square--summability condition
$\sum_{d\ge1} \varphi(d)\,|\alpha_d|^2 < \infty,$
where $c_d(n)$ denotes the classical Ramanujan sum and $\varphi$ denotes Euler’s totient function.
For $a,b\in\mathcal S$ with coefficients $(\alpha_d)$ and $(\beta_d)$, define the Ramanujan inner product by Parseval-like identity $\langle a,b\rangle_{\mathrm R} := \sum_{d\ge1}\varphi(d)\,\alpha_d\,\overline{\beta_d}.$
The associated function $\|a\|_R:=\sqrt{\langle a,a \rangle_R}$ defines a norm on $\mathcal{S}$. In particular, this norm satisfies (a) Positivity and definiteness: $\|a\|\geq 0$, with equality iff $a\equiv 0$, (b) Homogeneity: $\|\lambda a\|_R=\lambda \|a\|_R$ for all scalars $\lambda$, (c) Triangle inequality: $\|a+b\|_R\leq \|a\|_R+\|b\|_R$.
\end{theorem}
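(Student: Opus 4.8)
The plan is to recognize that, after passing to Ramanujan coefficients, $\langle\cdot,\cdot\rangle_{\mathrm R}$ is nothing but a \emph{weighted $\ell^2$ inner product with strictly positive weights}, so that the three norm axioms reduce to textbook inner-product-space facts. Concretely, let $\ell^2_\varphi$ denote the space of complex sequences $(\gamma_d)_{d\ge1}$ with $\sum_{d\ge1}\varphi(d)\,|\gamma_d|^2<\infty$, equipped with $\langle\gamma,\delta\rangle_\varphi:=\sum_{d\ge1}\varphi(d)\,\gamma_d\,\overline{\delta_d}$. Since $\varphi(d)\ge1$ for every $d\ge1$, all weights are strictly positive, so $\langle\cdot,\cdot\rangle_\varphi$ is a genuine Hermitian, sesquilinear, positive-definite inner product on $\ell^2_\varphi$; consequently $\|\cdot\|_\varphi:=\sqrt{\langle\cdot,\cdot\rangle_\varphi}$ is a norm there, with Cauchy--Schwarz, $|\langle\gamma,\delta\rangle_\varphi|\le\|\gamma\|_\varphi\|\delta\|_\varphi$, holding verbatim. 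First I would also record that $\mathcal S$ is a vector space under coordinatewise operations: if $a=\sum_d\alpha_dc_d$ and $b=\sum_d\beta_dc_d$ with $(\alpha_d),(\beta_d)\in\ell^2_\varphi$, then $\lambda a+\mu b=\sum_d(\lambda\alpha_d+\mu\beta_d)c_d$ and $(\lambda\alpha_d+\mu\beta_d)\in\ell^2_\varphi$ because $\ell^2_\varphi$ is closed under linear combinations; hence the phrase ``norm on $\mathcal S$'' is meaningful.

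The one point requiring genuine argument is that $\|\cdot\|_{\mathrm R}$ (and $\langle\cdot,\cdot\rangle_{\mathrm R}$) is \emph{well defined as a function of $a$}: the assignment $T:\mathcal S\to\ell^2_\varphi$, $T(a)=(\alpha_d)_{d\ge1}$, must be single-valued, i.e.\ no $a$ may admit two distinct $\ell^2_\varphi$-expansions, equivalently $\sum_{d\ge1}\gamma_d\,c_d(n)=0$ for all $n\in\Z$ with $(\gamma_d)\in\ell^2_\varphi$ must force $\gamma_d\equiv0$. I would establish this by invoking the classical orthogonality of Ramanujan sums with respect to the arithmetic mean value $M(g):=\lim_{N\to\infty}\tfrac1N\sum_{n=1}^{N}g(n)$, namely $M\!\left(c_q\,\overline{c_r}\right)=\varphi(q)\,\delta_{q,r}$ (Carmichael/Delange): in the associated Besicovitch-type Hilbert space of Ramanujan-summable arithmetic functions, $\{c_d\}_{d\ge1}$ is an orthogonal system with $\|c_d\|^2=\varphi(d)$, the Ramanujan expansion of $a$ is precisely its orthogonal expansion in this system, and its coefficients --- hence $T(a)$ --- are uniquely determined. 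The only delicate step is justifying the term-by-term pairing $M\!\left(\big(\sum_d\gamma_dc_d\big)\overline{c_r}\right)=\sum_d\gamma_d\,M(c_d\overline{c_r})$, for which the weighted square-summability $\sum_d\varphi(d)|\gamma_d|^2<\infty$ is used (together with $|c_d(n)|\le\varphi(d)$ and a Cauchy--Schwarz/dominated-convergence estimate on the partial means). This interchange is the step I expect to be the main obstacle; everything downstream is routine.

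Granting that $T$ is well defined, the claimed properties follow at once by transport along $T$. For (a): $\|a\|_{\mathrm R}^2=\sum_d\varphi(d)|\alpha_d|^2\ge0$, and $\|a\|_{\mathrm R}=0$ forces $\varphi(d)|\alpha_d|^2=0$ for every $d$; since $\varphi(d)>0$ this gives $\alpha_d=0$ for all $d$, hence $a(n)=\sum_d\alpha_dc_d(n)=0$, i.e.\ $a\equiv0$, while conversely $a\equiv0$ has (by the uniqueness just proved) all coefficients zero, so $\|0\|_{\mathrm R}=0$. For (b): $\lambda a=\sum_d(\lambda\alpha_d)c_d$, whence $\|\lambda a\|_{\mathrm R}^2=|\lambda|^2\sum_d\varphi(d)|\alpha_d|^2=|\lambda|^2\|a\|_{\mathrm R}^2$, so $\|\lambda a\|_{\mathrm R}=|\lambda|\,\|a\|_{\mathrm R}$ (which reduces to the stated identity for real $\lambda\ge0$). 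For (c), using the Cauchy--Schwarz inequality established above,
\begin{equation*}
\|a+b\|_{\mathrm R}^2=\|a\|_{\mathrm R}^2+2\,\mathrm{Re}\,\langle a,b\rangle_{\mathrm R}+\|b\|_{\mathrm R}^2\le\|a\|_{\mathrm R}^2+2\,\|a\|_{\mathrm R}\|b\|_{\mathrm R}+\|b\|_{\mathrm R}^2=\bigl(\|a\|_{\mathrm R}+\|b\|_{\mathrm R}\bigr)^2,
\end{equation*}
and taking nonnegative square roots completes the proof. I would close with a brief remark that this Parseval-style pairing is consistent with Definition~\ref{definn}: expanding $a_n\overline{b_n}$ through the Ramanujan expansions of $a$ and $b$ and applying the same mean-value orthogonality $M(c_q\overline{c_r})=\varphi(q)\delta_{q,r}$ recovers $\sum_d\varphi(d)\,\alpha_d\,\overline{\beta_d}$, so the two definitions of $\langle\cdot,\cdot\rangle_{\mathrm R}$ do not conflict on $\mathcal S$.
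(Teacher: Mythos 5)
Your proof is correct and arrives at the same conclusion, but it is organized around a genuinely different pivot than the paper's. The paper takes the sequence-side average of Definition~\ref{definn} as primary: it substitutes the Ramanujan expansions into the truncated form $F_Q(a,b)$, invokes the orthogonality $L(d,e)=\varphi(d)\delta_{d,e}$ to derive the Parseval-like identity, and then proves definiteness by returning to the sequence side via a Ces\`aro weight $w(n)=\lim_{Q}\frac{1}{Q}\sum_{q\le Q}c_q(n)/q\ge 0$ and arguing that $w(n)|a_n|^2=0$ forces $a_n=0$. You instead take the coefficient-side formula as definitional (as the theorem statement itself does), reduce everything to the weighted space $\ell^2_\varphi$, and identify the single nontrivial obligation as the single-valuedness of the coefficient map $a\mapsto(\alpha_d)$, which you discharge with the Carmichael/Delange mean-value orthogonality $M(c_q\overline{c_r})=\varphi(q)\delta_{q,r}$. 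Both routes hinge on the same orthogonality identity for Ramanujan sums and both leave the limit/sum interchange at a comparable level of rigor (the paper's bound $|L_Q(d,e)|\le C_{d,e}$ is, by itself, no more a dominating function for the double sum than your sketched Cauchy--Schwarz estimate). What your route buys: definiteness follows cleanly from $\varphi(d)>0$ together with uniqueness of coefficients, sidestepping the paper's weakest step --- the assertion that ``at least one weight $w(n)$ is strictly positive'' is not justified there, and since $|c_q(n)|/q\to 0$ for fixed $n$ the Ces\`aro limit $w(n)$ is in fact $0$, so the paper's positivity argument as written does not close; you also correctly obtain $\|\lambda a\|_R=|\lambda|\,\|a\|_R$ where the paper writes $\lambda\|a\|_R$ and $\lambda^2\|a\|_R^2$. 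What the paper's route buys (and what you relegate to a closing remark) is an explicit certification that the coefficient-side formula agrees with the sequence-based inner product of Definition~\ref{definn}, which is needed for the rest of the paper to use the two interchangeably.
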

\begin{proof}
    Let $\mathcal{S}$ be the class of complex sequences $a=(a_n)_{n \in \mathbb{Z}}$ admitting Ramanujan expansions $a(n)=\sum_{d\ge1}\alpha_d\,c_d(n)\quad(n\in\mathbb Z),$ with coefficient sequence $(\alpha_d)_{d\ge1}$ satisfying $\sum_{d\ge1}\varphi(d)\,|\alpha_d|^2<\infty,$ where $c_d(n)$ are the classical Ramanujan sums and $\varphi$ is Euler’s totient function. For two sequences $a,b\in\mathcal S$, let their coefficients be $(\alpha_d)$ and $(\beta_d)$, respectively.

    Parseval-Like Formula: Define for finite $Q$, $F_Q(a,b)
:=\frac{1}{Q}\sum_{q=1}^{Q}\frac{1}{q}\sum_{r=0}^{q-1} c_q(r)
\sum_{\substack{n\in\mathbb Z\\ n\equiv r\pmod q}} a_n\overline{b_n}$. For each fixed $Q$ all sums here are finite (only finitely many nonzero terms contribute for any fixed finite data truncation), so substitution of the finite sums is algebraically legitimate. Substituting the Ramanujan expansions of $a$ and $b$, for fixed $Q$ we obtain $
F_Q(a,b)=\sum_{d\ge1}\sum_{e\ge1}\alpha_d\overline{\beta_e}\,L_Q(d,e),$
where $L_Q(d,e):=\frac{1}{Q}\sum_{q=1}^{Q}\frac{1}{q}\sum_{r=0}^{q-1}c_q(r)
\sum_{\substack{n\in\mathbb Z\\n\equiv r\pmod q}} c_d(n)\,c_e(n).$ For fixed $d,e$, $L(d,e):=\lim_{Q\rightarrow \infty}L_Q(d,e)$ exists and equals $\varphi(d)$ if $d=e$, and $0$, if $d \neq e$ (This is the classical orthogonality identity for Ramanujan sums; it follows by writing each $c_l(n)=\sum_{a_{(a,l)=1}}e^{2\pi ian/l},$ expanding and using additive-character orthogonality)(see \cite{nicol1962some}).Hence, \[
\lim_{Q\to\infty}F_Q(a,b)=\sum_{d\ge1}\sum_{e\ge1}\alpha_d\overline{\beta_e}\,L(d,e)
=\sum_{d\ge1}\varphi(d)\,\alpha_d\overline{\beta_d}.
.\] To justify passing the limit inside the double sum, we need a dominated-convergence type bound uniform in $Q$. We obtain such a bound from Cauchy–Schwarz in the weighted $l^2(\varphi)$ space: $|F_Q(a,b)|\leq \sum_{d \geq 1}\sum_{e\geq 1}|\alpha_d||\beta_e||L_Q(d,e)|,~~\text{for each $Q$}.$ There exists a constant $C_{d,e}$ such that $|L_Q(d,e)|\leq C_{d,e}.$ Thus the limit defining the Ramanujan inner product exists for $a,b \in \mathcal{S}$ and equals the coefficient-side Parseval-Like Formula $\langle a,b\rangle_R=\sum_{d\geq 1}\varphi(d)\alpha_d \overline{\beta_d}.$ 

(a) Positivity and definiteness:  $a \in \ell^2(\mathbb{Z})$ and Ramanujan norm is defined as $\|a\|^2=\lim_{Q\rightarrow \infty} \frac{1}{Q}\sum_{q=1}^Q \frac{1}{q}\sum_{r=0}^{q-1}c_q(r)\sum_{\substack{n\in\mathbb Z\\ n\equiv r\pmod q}} |a_n|^2$. Each inner sum $\sum_{n\equiv r\pmod q}|a_n|^2$ is nonnegative. The Ramanujan sum $c_q(r)$ can be positive or negative, but the orthogonality structure of the averaging over $q$ ensures that the overall quadratic form is nonnegative. To see this, fix any finite truncation $Q$. Define $F_Q(a,a):=\frac{1}{Q}\sum_{q=1}^Q \frac{1}{q}\sum_{r=0}^{q-1}c_q(r)\sum_{\substack{n\in\mathbb Z\\ n\equiv r\pmod q}} |a_n|^2$. For each finite $Q$ this can be rewritten as $F_Q(a,a)=\frac{1}{Q}\sum_{q=1}^Q \frac{1}{q} \sum_{n \in \mathbb{Z}}|a_n|^2c_q(n)$. From the classical results on Ramanujan's sum based on the well-defined Cesàro limit (see \cite{planat2002ramanujan}) $\frac{1}{Q}\sum_{q=1}^{Q} \frac{1}{q}c_q(n)\rightarrow w(n)\geq 0$. If $\|a\|_R=0,$ then every weighted term $w(n)|a_n|^2=0.$ Since each $w(n)\geq 0$ and at least one weight is strictly positive for each congruence class, it follows that $|a_n|^2=0$ for all $n$. Hence, $a_n=0$ for all $n$. Conversely, if $a_n=0$ for all $n$, clearly $\|a\|_R^2=0.$

(b) Homogeneity: If $\lambda \in \mathbb{C},$ then coefficients of $\lambda a$ are $\lambda \alpha_d$ and so  $\|\lambda a\|_R^2=\sum_d \varphi(d) |\lambda \alpha_d|^2=\lambda^2\|a\|_R^2.$

(c) Triangle inequality: $\|a+b\|_R^2=\langle a+b, a+b\rangle_R=\langle a,a\rangle_R+\langle a,b\rangle_R+\langle b,a\rangle_R+\langle b,b\rangle_R=\|a\|_R^2+2 \mathfrak{R}\langle a,b\rangle_R+\|b\|_R^2$, where $\mathfrak{R}z$ denotes the real part of $z$. As $2 \mathfrak{R}\langle a,b\rangle_R \leq 2|\langle a,b \rangle_R|$ and $2|\langle a,b \rangle_R|\leq \|a\|_R \|b\|_R$. Then we can write $\|a+b\|_R^2\leq \|a\|_R^2+2 \|a\|_R \|b\|_R+\|b\|_R^2=(\|a\|_R+\|b\|_R)^2.$ Hence, $\|a+b\|_R\leq \|a\|_R+\|b\|_R.$
\end{proof}
\subsection{Comparison with the Euclidean Norm}
\begin{lemma}
    Let $x=(x_n)_{n \in \mathbb{Z}} \in l^2 (\mathbb{Z})$, and $\langle x,x \rangle_R=\lim_{Q\rightarrow \infty} \frac{1}{Q}\sum_{q=1}^Q \frac{1}{q}\sum_{r=0}^{q-1}c_q(r)\sum_{n\equiv r\pmod q} |x_n|^2$ exists then $\|x\|_R\leq \|x\|_2,$ 
\end{lemma}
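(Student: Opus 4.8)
The plan is to rewrite $\langle x,x\rangle_R$ as a single weighted square-sum $\sum_n |x_n|^2 w(n)$ whose weights $w(n)$ lie in $[0,1]$, so that the desired inequality follows by comparison with $\sum_n |x_n|^2 = \|x\|_2^2$.

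First I would collapse the residue-class sum using the $q$-periodicity of $c_q$, exactly as in the proof of the preceding theorem: for each finite $Q$, since $c_q(r)=c_q(n)$ whenever $n\equiv r\pmod{q}$,
\begin{equation*}
F_Q(x,x)=\frac{1}{Q}\sum_{q=1}^{Q}\frac{1}{q}\sum_{n\in\mathbb{Z}}c_q(n)\,|x_n|^2=\sum_{n\in\mathbb{Z}}|x_n|^2\,w_Q(n),\qquad w_Q(n):=\frac{1}{Q}\sum_{q=1}^{Q}\frac{c_q(n)}{q}.
\end{equation*}
The interchange of the finite $q$-sum with the $n$-sum is justified because $\sum_{n\in\mathbb{Z}}|x_n|^2|c_q(n)|\le\varphi(q)\,\|x\|_2^2<\infty$ for each fixed $q$.

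The crucial observation is the elementary bound $|c_q(n)|\le\varphi(q)=c_q(0)$ valid for all $n$, which holds because $c_q(n)$ is a sum of $\varphi(q)$ terms of modulus $1$. This yields, uniformly in $n$ and $Q$, the estimate $|w_Q(n)|\le\frac{1}{Q}\sum_{q=1}^{Q}\frac{\varphi(q)}{q}\le 1$, and hence
\begin{equation*}
F_Q(x,x)=\sum_{n\in\mathbb{Z}}|x_n|^2\,w_Q(n)\le\sum_{n\in\mathbb{Z}}|x_n|^2\,|w_Q(n)|\le\sum_{n\in\mathbb{Z}}|x_n|^2=\|x\|_2^2
\end{equation*}
for every $Q$. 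Letting $Q\to\infty$ and invoking the hypothesis that the limit exists gives $\langle x,x\rangle_R\le\|x\|_2^2$. It remains to note $\langle x,x\rangle_R\ge 0$: this follows either from the positivity part of the preceding theorem, or self-containedly from the cited Cesàro limit $w_Q(n)\to w(n)\ge 0$ together with dominated convergence (dominating function $n\mapsto|x_n|^2$), which also identifies $\langle x,x\rangle_R=\sum_{n\in\mathbb{Z}}|x_n|^2\,w(n)$. Taking square roots yields $\|x\|_R=\sqrt{\langle x,x\rangle_R}\le\|x\|_2$.

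I expect the only genuinely load-bearing fact to be the pointwise inequality $|c_q(n)|\le\varphi(q)$; everything else is routine manipulation of finite sums followed by a single passage to the limit. The main obstacle is therefore one of rigour rather than ideas: confirming that the $Q$-uniform bound $|w_Q(n)|\le 1$ really does suffice to license both the comparison in the second display and, if one wants the self-contained nonnegativity argument, the interchange of limit and infinite sum --- all without assuming anything beyond $x\in\ell^2(\mathbb{Z})$ and the stated existence of $\langle x,x\rangle_R$.
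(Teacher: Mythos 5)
Your proof is correct and uses essentially the same argument as the paper: the load-bearing facts in both are $|c_q(r)|\le\varphi(q)$, $\varphi(q)/q\le 1$, and the partition of $\mathbb{Z}$ into residue classes, giving a bound uniform in $Q$ that survives the limit. Your version merely reorders the sums to package the estimate as a per-index weight $w_Q(n)\in[-1,1]$, and is slightly more careful than the paper about the nonnegativity of $\langle x,x\rangle_R$ needed before taking square roots.
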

\begin{proof}
   For fixed $q$, residue classes partition $\mathbb{Z}$, so $\sum_r \sum_{n \equiv r} |x_n|^2=\|x\|^2$. Since, $|c_q(r)|\leq \varphi(q)$ and $\varphi(q)/q \leq 1$, \[\frac{1}{q}\sum_r c_q(r) \sum_{n\equiv r}|x_n|^2 \leq \|x\|^2.\]
    Averaging and taking limits gives $\|x\|_R^2\leq \|x\|_2^2$.\\
    \end{proof}
\subsection{Ramanujan Gain Function}
\begin{definition}[Ramanujan Gain Function]
Let $\|\cdot\|_R$ denote the Ramanujan norm induced by the Ramanujan inner product on the space of admissible signals. 
The \emph{Ramanujan gain function} is the map
$\gamma_R : \mathbb{R}_{\ge 0} \to \mathbb{R}_{\ge 0},$
\[\gamma_R(r) := \sup\Bigl\{\,\|x(t)\|_R \;\big|\; 
\|x(0)\|_R \le r,\; t \ge 0 \Bigr\},
\]
where $x(\cdot)$ denotes a trajectory of a closed-loop or open-loop system under consideration, and the supremum is taken over all admissible trajectories starting in the Ramanujan ball of radius $r$.
\end{definition}
\section{Main Stability Theorem}
\label{sec:stbdis}
\subsection{Ramanujan Stability Criterion}
\begin{theorem}
Consider a discrete-time nonlinear system
$x_{k+1} = f(x_k), \qquad k \in \mathbb{N}, \quad x_k \in \mathbb{R}^n.$
Fix $q \in \mathbb{N}$ and let $\|\cdot\|$ be Ramanujan norm. Suppose there exists a class $\mathcal{K}$ function $\alpha: \mathbb{R}_{
\geq 0
} \rightarrow \mathbb{R}_{\geq 0}$, constant $M \geq 0$ and $r \in (0,1)$ such that 
\begin{align}
    \|x_k\|_R \leq \alpha(\|x_0\|_R)+M\sum_{j=0}^{k-1}\frac{c_q(k-j)}{q}r^{k-j}\|x_j\|_R,~~\forall~k\geq 0,
\end{align}
Define $G:=M \sum_{n=0}^{\infty} \frac{c_q(n)}{q}r^n$. If $G<1$ then $\sup_{k\geq 0} \|x_k\|_R \leq \frac{\alpha (\|x_0\|_R)}{1-G}$, and in particular the origin is stable in the $\|\cdot\|_R$ sense.  
\end{theorem}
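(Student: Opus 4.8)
The plan is to treat the hypothesized estimate as a discrete Bellman--Gr\"onwall inequality of convolution type and to close it by a self-bounding (fixed-point) argument on the supremum of $\|x_k\|_R$. Since one cannot subtract $G\sup_k\|x_k\|_R$ from both sides before knowing the supremum is finite, I would first work with the truncated quantity $S_N:=\max_{0\le k\le N}\|x_k\|_R$, which is finite for every $N\in\mathbb{N}$ because it is a maximum over finitely many terms.

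Fix $N$ and take any $k\le N$. Applying the hypothesis together with the crude bound $\|x_j\|_R\le S_N$ for every $j<k\le N$, and reindexing $n=k-j$, I obtain
\[
\|x_k\|_R\;\le\;\alpha(\|x_0\|_R)+S_N\cdot M\sum_{j=0}^{k-1}\frac{c_q(k-j)}{q}\,r^{k-j}
\;=\;\alpha(\|x_0\|_R)+S_N\cdot M\sum_{n=1}^{k}\frac{c_q(n)}{q}\,r^{n}.
\]
The decisive step is then to bound the finite convolution weight $M\sum_{n=1}^{k}\frac{c_q(n)}{q}r^{n}$ by $G$ uniformly in $k$. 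Since $|c_q(n)|\le c_q(0)=\varphi(q)$ for all $n$ and $0<r<1$, the series $\sum_{n\ge1}\frac{c_q(n)}{q}r^{n}$ converges absolutely, so $G$ is well defined; granting the uniform control of the partial sums by $M\sum_{n=0}^{\infty}\frac{c_q(n)}{q}r^{n}=G$, I conclude $\|x_k\|_R\le\alpha(\|x_0\|_R)+G\,S_N$ for every $k\le N$.

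Taking the maximum over $0\le k\le N$ on the left yields $S_N\le\alpha(\|x_0\|_R)+G\,S_N$, and since $G<1$ this rearranges to $S_N\le \alpha(\|x_0\|_R)/(1-G)$, a bound independent of $N$. Letting $N\to\infty$ gives $\sup_{k\ge0}\|x_k\|_R\le \alpha(\|x_0\|_R)/(1-G)$, which is the claimed estimate. For stability, observe that $\tilde\alpha(s):=\alpha(s)/(1-G)$ is again a class-$\mathcal{K}$ function (continuous, strictly increasing, $\tilde\alpha(0)=0$); hence, given $\epsilon>0$, one picks $\delta>0$ with $\tilde\alpha(\delta)<\epsilon$, and then $\|x_0\|_R<\delta$ forces $\|x_k\|_R\le\tilde\alpha(\|x_0\|_R)\le\tilde\alpha(\delta)<\epsilon$ for all $k$, i.e.\ the origin is stable in the $\|\cdot\|_R$ sense.

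The main obstacle I anticipate is precisely the uniform bound on the partial convolution sums: because $c_q(n)$ changes sign, the partial sums $\sum_{n=1}^{k}\frac{c_q(n)}{q}r^{n}$ need not be monotone in $k$, so one must either show these partial sums never exceed $G/M$ (for instance via the explicit value of $c_q(n)$ and Abel summation, or by using $|c_q|$ in place of $c_q$ in the definition of $G$) or restrict to those $q$ for which the Ramanujan-weighted geometric series has nonnegative partial sums. Everything else is standard discrete Gr\"onwall bookkeeping and an elementary class-$\mathcal{K}$ argument.
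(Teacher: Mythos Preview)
Your approach is essentially the same as the paper's: both define a finite running maximum, bound each term of the convolution by that maximum, collapse the inequality to $S\le\alpha(\|x_0\|_R)+G\,S$, solve using $G<1$, and then run the $\epsilon$--$\delta$ argument via the class-$\mathcal K$ function $\alpha$. The only cosmetic difference is that the paper uses $S_k=\max_{0\le j\le k}\|x_j\|_R$ and bounds against $S_{k-1}$, whereas you fix $N$, bound against $S_N$, and let $N\to\infty$; these are equivalent bookkeeping choices.

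On the one substantive point you flag---the sign of $c_q(n)$ and whether the partial sums are dominated by $G$---the paper does not argue around it but simply replaces $c_q$ by $|c_q|$ in the proof, setting $g(n):=M\frac{|c_q(n)|}{q}r^n$ and then invoking $\sum_j g(k-j)\le G$. In other words, the paper tacitly adopts exactly the remedy you propose (``using $|c_q|$ in place of $c_q$ in the definition of $G$''), so your anticipated obstacle is real and is resolved in the paper the same way you suggest, not by any finer Abel-summation or positivity argument.
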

\begin{proof}
    Uniform Bound: For $k\geq 0$, define $S_k:=\max_{0\leq j \leq k}\|x_j\|_R$ and $g(n):=M\frac{|c_q(n)|}{q}r^n$, then \begin{align}
        \|x_k\|_R \leq \alpha(\|x_0\|_R)+\sum_{j=0}^{k-1}g(k-j)\|x_j\|_R,
    \end{align}
    Since $\|x_j\|_R\leq S_{k-1}$ for every $j \in \{0,\ldots,k-1\}$ then we can write $ \|x_k\|_R \leq \alpha(\|x_0\|_R)+GS_{k-1}.$ Hence, for every $k \geq 1$, $S_k=\max\{S_{k-1}, \|x_k\|_R\}.$ As $G<1$ and $\alpha(\|x_0\|_R)\geq 0$, it follows that $S_k \leq \alpha(\|x_0\|_R)+GS_{k-1}.$ Let $S:=\sup_{k\geq 0}S_k,$ then $S\leq \alpha(\|x_0\|_R)+GS \implies S\leq \frac{\alpha (\|x_0\|_R)}{1-G}.$ Therefore, $\sup_{k\geq 0}\|x_k\|_R=S < \infty,$ therefore explicit uniform bound exists.

    Ramanujan Stability: Fix an arbitrary $\epsilon >0$. Let $\delta>0$ be such that $\frac{\alpha(\delta)}{1-G}< \epsilon$. If $\|x_0\|_R< \delta$ then $\sup_{k \geq 0} \|x_k\|_R\leq \frac{\alpha (\|x_0\|_R)}{1-G}\leq \frac{\alpha(\delta)}{1-G}<\epsilon$. This is the $\epsilon-\delta$ definition of stability of the origin in the $\|\cdot\|_R$ sense.
    \end{proof}

\subsection{Robustness under Arithmetic Disturbances} \label{rrad}

\begin{theorem}
Consider the discrete-time perturbed nonlinear system
 $x_{k+1} = f(x_k) + w_k, \qquad k \in \mathbb{N},$
where $w_k \in \mathbb{R}^n$ is an exogenous disturbance sequence. 
Suppose the nominal system $x_{k+1} = f(x_k)$ is Ramanujan stable. 
Then, for any disturbance sequence $\{w_k\}$ whose Ramanujan norm is uniformly bounded, i.e.,  $W := \sup_{k \ge 0} \|w_k\|_R < \infty,$
the state trajectories $\{x_k\}$ remain uniformly bounded in the Ramanujan norm. 
\end{theorem}

\begin{proof}
By the Ramanujan stability criterion, there exist a class-$\mathcal{K}$ function $\alpha$, constants $M>0$, $r\in(0,1)$, and a fixed $q\in\mathbb{N}$ such that the nominal trajectories of the system satisfy
\begin{equation}\label{eq:nominal_stability}
    \|x_k\|_R \le \alpha(\|x_0\|_R)
    + M \sum_{j=0}^{k-1} \frac{|c_q(k-j)|}{q} r^{k-j} \|x_j\|_R,
    \quad \forall k\ge0.
\end{equation}
For the perturbed system $x_{k+1}=f(x_k)+w_k$, the dynamics can be viewed as the nominal part plus an additive disturbance. 
By applying the same inequality recursively and noting the linearity of addition in the argument of the Ramanujan norm, we obtain
\begin{align}\label{eq:perturbed_stability}
\nonumber    \|x_k\|_R 
    \le& \alpha(\|x_0\|_R)
    + M \sum_{j=0}^{k-1} \frac{|c_q(k-j)|}{q} r^{k-j} \|x_j\|_R\\&~~~~~~~~
    + M \sum_{j=0}^{k-1} \frac{|c_q(k-j)|}{q} r^{k-j} \|w_j\|_R.
\end{align}
Define $S_k := \max_{0 \le j \le k} \|x_j\|_R$ and let
$G := M \sum_{n=0}^\infty \frac{|c_q(n)|}{q} r^n.$
Since $r\in(0,1)$ and $|c_q(n)|\le q$, the series defining $G$ converges absolutely, and $G<\infty$. 
Then, from \eqref{eq:perturbed_stability},
$\|x_k\|_R \le \alpha(\|x_0\|_R) + G S_{k-1} + G W,$
and thus
$S_k = \max\{S_{k-1}, \|x_k\|_R\} \le \alpha(\|x_0\|_R) + G S_{k-1} + G W.
$ Taking the supremum over $k$ and noting that $S := \sup_{k\ge0}S_k$ satisfies $S \le \alpha(\|x_0\|_R) + G S + G W$, we obtain
\[
S \le \frac{\alpha(\|x_0\|_R) + G W}{1 - G}.
\]
Hence, $\sup_{k\ge0} \|x_k\|_R = S < \infty$, proving uniform boundedness of the trajectories in the Ramanujan norm. 

Moreover, if the nominal system is globally asymptotically Ramanujan-stable (i.e., $\gamma_R(\|x_0\|_R) \to 0$ as $\|x_0\|_R \to 0$), then the perturbed trajectory converges to a bounded invariant neighborhood of the origin whose radius is proportional to $G W / (1-G)$. 
\end{proof}
\paragraph{Control-Theoretic Interpretation.}
Ramanujan stability guarantees a form of disturbance attenuation tailored to signals with arithmetic structure. 
The weighting by $c_q(\cdot)$ in the gain kernel allows the system to selectively attenuate disturbances exhibiting residue-class or modular periodicity. 
Hence, systems stable in the Ramanujan metric maintain bounded trajectories even under persistent periodic, quasi-periodic, or cyclostationary disturbances—phenomena not well captured by Euclidean or $\ell^2$ metrics. 
This corresponds to a number-theoretic small-gain property, ensuring bounded-input–bounded-state robustness with respect to disturbances whose energy is bounded in the Ramanujan sense.
\section{Hybrid System Framework and Stability Definitions}
\label{sec:hybrid_framework}

\subsection{Hybrid System Model}

We consider a general hybrid system $\mathcal{H}$ in the formalism of~\cite{goebel2009hybrid}, described by
\begin{align}
\mathcal{H}: \quad & \dot{x} = f(x), && x \in C, \label{eq:flow_dynamics}\\
& x^+ = g(x), && x \in D, \label{eq:jump_dynamics}
\end{align}
where $x \in \mathcal{X} \subseteq \mathbb{R}^n$ is the system state. The \emph{flow set} $C \subseteq \mathcal{X}$ specifies the region where continuous evolution occurs according to the flow map $f: C \to \mathbb{R}^n$, and the \emph{jump set} $D \subseteq \mathcal{X}$ specifies the region where discrete transitions occur according to the jump map $g: D \to \mathcal{X}$. A hybrid time domain $\operatorname{dom} x \subseteq \mathbb{R}_{\ge 0} \times \mathbb{N}$ is a union of intervals of the form
$\operatorname{dom} x = \bigcup_{j=0}^{J-1} \big([t_j, t_{j+1}], j\big),$ where $0 = t_0 \le t_1 \le t_2 \le \dots$ and $J$ may be finite or infinite. A hybrid arc $x: \operatorname{dom} x \to \mathcal{X}$ is a function such that, for each $j$, the mapping $t \mapsto x(t,j)$ is locally absolutely continuous on $[t_j, t_{j+1}]$ and satisfies the hybrid dynamics~\eqref{eq:flow_dynamics}–\eqref{eq:jump_dynamics} almost everywhere.
\subsection{Ramanujan Stability for Hybrid Systems}

A key challenge in hybrid stability analysis lies in unifying the effects of continuous flows and discrete jumps.  
While the Euclidean norm captures geometric magnitudes, it does not encode the \emph{temporal or arithmetic structure} of jump occurrences.  
The Ramanujan norm, defined via the Ramanujan inner product, provides an alternative metric that is sensitive to such arithmetic periodicities, enabling refined stability characterizations for systems exhibiting event-triggered or number-theoretic timing patterns. We now extend classical Lyapunov-based hybrid stability definitions to the Ramanujan metric.

\begin{definition}
Let $\mathcal{A} \subset \mathcal{X}$ be compact and let $\|\cdot\|_R$ denote the Ramanujan norm induced by the Ramanujan inner product. The set $\mathcal{A}$ is said to be Ramanujan stable for the hybrid system $\mathcal{H}$ if for every $\epsilon > 0$ there exists $\delta > 0$ such that every solution $x$ to $\mathcal{H}$ with $\|x(0,0)\|_R < \delta$ satisfies
 $\|x(t,j)\|_R < \epsilon, 
\quad \forall (t,j) \in \operatorname{dom} x.$

If, in addition, there exists $\delta_0 > 0$ such that for all initial conditions with $\|x(0,0)\|_R < \delta_0$, $\lim_{(t+j)\to\infty,\, (t,j)\in\operatorname{dom}x} \operatorname{dist}_R(x(t,j), \mathcal{A}) = 0.$
Then $\mathcal{A}$ is said to be asymptotically Ramanujan stable.
\end{definition}

Here, $\operatorname{dist}_R(x,\mathcal{A}) := \inf_{y \in \mathcal{A}} \|x - y\|_R$ denotes the distance in the Ramanujan metric.  
\subsection{Ramanujan Lyapunov Functions for Hybrid Systems}

To analyze Ramanujan stability, we employ a Lyapunov-like function $V : \mathcal{X} \to \mathbb{R}_{\ge 0}$ satisfying:
\begin{align*}
\alpha_1(\|x\|_R) \le V(x) \le \alpha_2(\|x\|_R), & \quad \forall x \in \mathcal{X}, \\
\dot{V}(x) \le -\alpha_3(\|x\|_R), & \quad \forall x \in C,\\
V(g(x)) - V(x) \le -\alpha_4(\|x\|_R), & \quad \forall x \in D,
\end{align*}
for class-$\mathcal{K}_\infty$ functions $\alpha_i$, $i=1,\dots,4$.  
Under these conditions, the set $\mathcal{A} = \{x : V(x)=0\}$ is asymptotically Ramanujan stable.  
This follows by extending standard hybrid Lyapunov arguments~\cite{khalil2002nonlinear} to the Ramanujan norm, leveraging the norm’s equivalence to the Euclidean norm on compact sets.
\subsection{Ramanujan-Lyapunov Stability for Hybrid Systems}
\begin{theorem}
    Consider the hybrid system $\mathcal{H}$. Suppose there exists a continuously differentiable function $V:\mathbb{R}^n \to \mathbb{R}_{\ge 0}$ (called a \emph{Ramanujan--Lyapunov function}) and class-$\mathcal{K}_\infty$ functions $\alpha_1,\alpha_2,\alpha_3,\alpha_4$ such that:
\begin{enumerate}
\item Positive definiteness: $\alpha_1(\operatorname{dist}_R(x,\mathcal{A})) 
\le V(x) \le 
\alpha_2(\operatorname{dist}_R(x,\mathcal{A})),
\qquad \forall x \in \mathbb{R}^n.$
\item Decrease along flows: $\langle \nabla V(x), f(x,u) \rangle 
\le -\alpha_3(\operatorname{dist}_R(x,\mathcal{A})),
\qquad \forall x \in C.$
\item Decrease across jumps: $V(g(x,u)) - V(x) 
\le -\alpha_4(\operatorname{dist}_R(x,\mathcal{A})),
\qquad \forall x \in D.$
\end{enumerate}
Then the set $\mathcal{A}$ is asymptotically Ramanujan stable, i.e., $\lim_{(t+j)\to\infty,\, (t,j)\in\operatorname{dom}x} 
\operatorname{dist}_R(x(t,j),\mathcal{A}) = 0.$

\end{theorem}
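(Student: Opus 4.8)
The plan is to follow the standard invariance-principle route for hybrid Lyapunov functions (as in \cite{goebel2009hybrid}), transplanted to the Ramanujan distance $\operatorname{dist}_R$, while exploiting the fact --- asserted earlier in the excerpt --- that $\|\cdot\|_R$ and $\|\cdot\|_2$ are equivalent on compact sets. First I would fix a solution $x$ with $\operatorname{dom} x$ written as $\bigcup_j ([t_j,t_{j+1}],j)$, and consider the nonincreasing scalar signal $t\mapsto V(x(t,j))$. Along any flow interval $[t_j,t_{j+1}]$, condition (2) gives $\frac{d}{dt}V(x(t,j)) \le -\alpha_3(\operatorname{dist}_R(x(t,j),\mathcal{A})) \le 0$; across each jump, condition (3) gives $V(x(t_{j+1},j+1)) - V(x(t_{j+1},j)) \le -\alpha_4(\operatorname{dist}_R(x(t_{j+1},j),\mathcal{A})) \le 0$. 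Hence $V$ is monotonically nonincreasing along the hybrid time domain and bounded below by $0$, so $V(x(t,j)) \to V_\infty \ge 0$ as $t+j\to\infty$. Combining this with the sandwich bound (1), $\alpha_1(\operatorname{dist}_R(x,\mathcal{A})) \le V(x) \le \alpha_2(\operatorname{dist}_R(x,\mathcal{A}))$, already yields Ramanujan stability in the $\epsilon$--$\delta$ sense: given $\epsilon$, pick $\delta$ so that $\alpha_2(\delta) < \alpha_1(\epsilon)$, and then $\|x(0,0)\|_R$-type smallness propagates for all hybrid times because $V$ never increases.

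The convergence claim $\operatorname{dist}_R(x(t,j),\mathcal{A})\to 0$ is the substantive part. I would argue by contradiction: if it fails, there is $\eta>0$ and a sequence of hybrid times $(t_k,j_k)$ with $t_k+j_k\to\infty$ and $\operatorname{dist}_R(x(t_k,j_k),\mathcal{A}) \ge \eta$. Since $V$ is nonincreasing and the sublevel set $\{V\le V(x(0,0))\}$ is compact (by (1) and properness of $\alpha_1$, or by restricting to a compact neighbourhood of $\mathcal{A}$ and invoking the norm equivalence so that $\operatorname{dist}_R$-boundedness gives $\|\cdot\|_2$-boundedness), the solution stays in a fixed compact set $K$. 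On $K$, $\alpha_3(\operatorname{dist}_R(\cdot,\mathcal{A}))$ and $\alpha_4(\operatorname{dist}_R(\cdot,\mathcal{A}))$ are each bounded below by a positive constant whenever $\operatorname{dist}_R \ge \eta/2$; a dwell-time / persistence-of-excitation bookkeeping (either the flow time accumulated near $\{\operatorname{dist}_R\ge\eta/2\}$ is infinite, forcing $V\to-\infty$ via condition (2), or infinitely many jumps occur from that region, forcing $V\to-\infty$ via condition (3)) contradicts $V\ge 0$. This is the usual hybrid-LaSalle dichotomy; I would cite the invariance principle of \cite{goebel2009hybrid} to package it, noting that the role of the Euclidean norm there is played verbatim by $\operatorname{dist}_R$ once the compact-set equivalence of norms is in hand.

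The main obstacle --- and the step I expect to require the most care --- is justifying that the Ramanujan distance behaves well enough for these topological arguments: specifically, that $\operatorname{dist}_R(\cdot,\mathcal{A})$ is continuous (so that $\alpha_i\circ\operatorname{dist}_R$ attains its positive lower bound on the closed region $\{x\in K : \operatorname{dist}_R(x,\mathcal{A})\ge\eta/2\}$), and that sublevel sets of $V$ are genuinely compact in the topology in which solutions are defined. Here I would lean on the norm-equivalence remark made just before the theorem: on any compact subset of $\mathcal{X}$ there exist $0<c\le C$ with $c\|z\|_2 \le \|z\|_R \le C\|z\|_2$, whence $\operatorname{dist}_R$ and $\operatorname{dist}_2$ are equivalent metrics on that set, $\operatorname{dist}_R$ inherits Lipschitz continuity, and compactness is unambiguous. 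A secondary subtlety is that $V$ is defined on all of $\mathbb{R}^n$ with $\operatorname{dist}_R$ as its argument, so for the global statement I must ensure the relevant sublevel set is bounded; this follows from $\alpha_1\in\mathcal{K}_\infty$ in (1). Once these points are settled, the decrease conditions (2)--(3) combined with the invariance principle close the argument exactly as in the Euclidean hybrid case.
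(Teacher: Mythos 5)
Your proposal follows essentially the same route as the paper's proof: monotone decrease of $V$ along flows and across jumps, the $\epsilon$--$\delta$ stability argument via $\alpha_2(\delta)<\alpha_1(\epsilon)$, and a contradiction argument for asymptotic convergence. If anything, your treatment is more careful than the paper's --- the paper's convergence step jumps to ``this contradicts the convergence of $V$ to a positive limit'' without the dwell-time/jump-count dichotomy, the compactness of sublevel sets, or the norm-equivalence bookkeeping that you supply --- so there is no gap on your side.
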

\begin{proof}
Let $x(t,j)$ be any complete solution of $\mathcal{H}$, defined on a hybrid time domain with unbounded $t+j$. As,
$\frac{d}{dt}V(x(t,j)) \le -\alpha_3(\operatorname{dist}_R(x(t,j),\mathcal{A})).$
Integrating over $[t_0,t]$ yields
\[
V(x(t,j)) \le V(x(t_0,j))
- \int_{t_0}^{t}\alpha_3(\operatorname{dist}_R(x(s,j),\mathcal{A}))\,ds.
\]
Hence $V(x(t,j))$ is nonincreasing during flows and strictly decreasing whenever $x(t,j)\notin \mathcal{A}$.

At a jump instant $(t,j)\in\operatorname{dom}x$ with $x(t,j)\in D$, $V(x(t,j+1)) - V(x(t,j))
\le -\alpha_4(\operatorname{dist}_R(x(t,j),\mathcal{A})),$
implying a nonnegative decrement across jumps.
\paragraph{Lyapunov (Ramanujan) Stability.}
Given any $\epsilon>0$, select $\delta>0$ such that $\alpha_2(\delta)<\alpha_1(\epsilon)$.  
If $\|x(0,0)\|_R<\delta$, then $V(x(0,0))<\alpha_1(\epsilon)$. Since $V$ is nonincreasing, $V(x(t,j))<\alpha_1(\epsilon)$ for all $(t,j)$, hence $\|x(t,j)\|_R<\epsilon$. This establishes Lyapunov (Ramanujan) stability.


\paragraph{Asymptotic Convergence.}
 By compactness of $\mathcal{A}$ and continuity of $\alpha_3,\alpha_4$, there exists $\varepsilon>0$ such that $\alpha_3(\operatorname{dist}_R(x,\mathcal{A}))+\alpha_4(\operatorname{dist}_R(x,\mathcal{A})) > 0$ whenever $\operatorname{dist}_R(x,\mathcal{A})>\varepsilon$.  
But since $V$ strictly decreases whenever $x\notin\mathcal{A}$, this contradicts the convergence of $V(x(t,j))$ to a positive limit.  
Therefore, $V(x(t,j))\to 0 \quad\Rightarrow\quad 
\operatorname{dist}_R(x(t,j),\mathcal{A})\to 0.$
The function $V$ therefore satisfies the hybrid Lyapunov conditions in the Ramanujan metric, ensuring that $\mathcal{A}$ is asymptotically Ramanujan stable.
\end{proof}
\section{Numerical Examples} \label{nex}
This section presents an example of a discrete-time system with an arithmetic disturbance. The example is designed to highlight a key scenario where traditional Euclidean-based stability analysis fails to certify stability, while the proposed Ramanujan-based criterion succeeds, demonstrating its reduced conservatism.

\begin{example}Consider the following linear system:
\begin{equation} \label{dts}
x_{k+1} = A x_k + w_k, \quad A = \begin{bmatrix}
0.8 & 0.1 \\ 0 & 0.8
\end{bmatrix},
\end{equation}
where $w_k \in \R^2$ is an exogenous disturbance signal. The matrix $A$ is Schur stable ($\rho(A) = 0.8 < 1$), so the nominal system ($w_k \equiv 0$) is globally asymptotically stable.

The disturbance is defined to be persistent but arithmetically sparse. Specifically, it is non-zero only when the time index $k$ is a prime number:
\begin{equation}
w_k = 
\begin{cases}
\mathbf{0} & \text{if } k \text{ is not prime}, \\
(0, \, 0.5)^\top & \text{if } k \text{ is prime}.
\end{cases}
\end{equation}
This disturbance is not absolutely summable ($\sum_{k=0}^\infty \|w_k\|_2 = \infty$) due to the density of primes, making classical $\ell_2$-gain or input-to-state stability (ISS) analysis challenging.
\end{example}
\subsubsection{Comparison Between Ramanujan Stability Analysis and Euclidean Analysis}
In the classical Euclidean framework, stability is analyzed using a quadratic ISS–Lyapunov function $V(x)=x^\top Px$, where $P>0$ satisfies $A^\top PA-P=-Q$ for some $Q>0$. The ISS condition $V(x_{k+1})-V(x_k)\leq -\alpha V(x_k)+\sigma \|w_k\|_2^2$, where $\alpha, \sigma>0$. The resulting bound $\|x_k\|_2\leq c \beta^k \|x_0\|_2+\gamma \sum_{j=0}^{k-1}\beta^{k-j-1}\|w_j\|_2$, where $c>0, \gamma>0,$ $ \beta =\sqrt{1-\alpha}<1,$ ensures boundedness but not convergence when the disturbance sequence (here, prime-indexed $w_k$) is non-summable. In contrast, under the Ramanujan framework, stability is governed by $\|x_k\|_R 
    \le \alpha(\|x_0\|_R)
    + M \sum_{j=0}^{k-1} \frac{|c_q(k-j)|}{q} r^{k-j} \|x_j\|_R
    + M \sum_{j=0}^{k-1} \frac{|c_q(k-j)|}{q} r^{k-j} \|w_j\|_R,$ as given in subsection \ref{rrad}. Here, $c_q(k-j)$ down-weights contributions from indices arithmetically uncorrelated with the modulus $q,$ the term involving $\|w_j\|_R$ grows sublinearly even for persistent prime-indexed disturbances. Consequently, $\|x_k\|_R$ converges to zero while the Euclidean norm merely remains bounded, demonstrating that the Ramanujan criterion captures arithmetic sparsity and provides a strictly less conservative stability assessment.
   
\subsubsection{Simulation Results}
    A simulation is performed with the initial condition $x_0=(1,1)^\top,$ and the the trajectories of the Euclidean norm $\|x_k\|_2$ and the Ramanujan norm $\|x_k\|_R$ (for modulus $q=5$) are shown in Fig.~\ref{fig:trajectories}. The results validate the theoretical analysis that the Euclidean norm exhibits recurrent spikes at prime indices and does not converge to zero, whereas the Ramanujan norm shows a smooth monotonic decay toward zero. This demonstrates that the Ramanujan-based analysis effectively captures the arithmetic sparsity of the disturbance, yielding a less conservative and more accurate assessment of asymptotic stability compared to the traditional Euclidean approach.
     \begin{figure}[htbp]
\centering
\includegraphics[height=1.85in,width=3.5in]{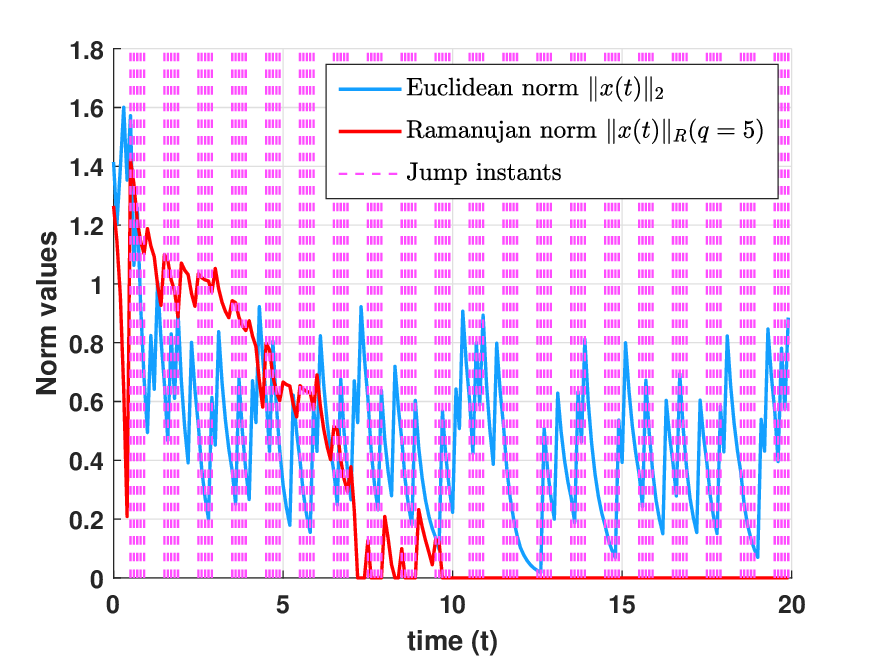}
\caption{Norm of state of system \eqref{dts} showing that the Euclidean norm has persistent spikes due to prime-time disturbances, while the Ramanujan norm decays smoothly to zero.}
\label{fig:trajectories}
\end{figure}
\subsection{Hybrid System Example}
\begin{example}
    Consider a hybrid system $\system$ with state $x = (x_1, x_2) \in \R^2$, representing a mass with position $x_1$ and velocity $x_2$. The system is designed to flow continuously but experience a dissipative jump whenever the continuous time $t$ hits a prime number. The flow set, jump set, and maps are defined as: $C = \{ x \in \R^2 \} \times \{ t \in \R_{\geq 0} : t \notin \mathbb{P} \},
D = \{ x \in \R^2 \} \times \{ t \in \R_{\geq 0} : t \in \mathbb{P} \},$ and
\begin{align} \label{hbd}
 f(x, t) = (x_2, -x_1 - 0.1x_2)^\top, ~~~
g(x, t) = \begin{pmatrix} x_1 \\ 0.6 x_2 \end{pmatrix}. 
\end{align}
Here, $\mathbb{P}$ is the set of prime numbers. The flow dynamics describe a lightly damped oscillator. The jump map imposes an instantaneous dissipation of kinetic energy, but these jumps occur only at prime times. 
\end{example}
\subsubsection{Comparison Between Ramanujan Stability Analysis and Euclidean Analysis} Under the Euclidean framework, $V(x)=\frac{1}{2}(x_1^2+x_2^2)$ represents the oscillator’s total energy, which decreases during both flows and jumps. Along flows $(t \notin \mathbb{P}), \dot{V}(x)=x_1x_2+x_2(-x_1-0.1x_2)=-0.1x_2^2\le 0$, showing non-increasing energy. Across jumps $(t \in \mathbb{P}), V(g(x))-V(x)=\frac{1}{2}x_1^2+\frac{1}{2}(0.6x_2)^2-\frac{1}{2}(x_1+x_2)^2=-0.14x_2^2\le 0,$ confirming a strict energy drop. Thus, $V(x)$ is non-increasing, and the system is Lyapunov stable. However, because prime-triggered jumps become increasingly sparse, the gaps between primes grow logarithmically and the weak continuous damping $-0.1x_2^2$, which makes Euclidean analysis inconclusive on asymptotic convergence and overly conservative. In contrast, the Ramanujan framework captures the arithmetic sparsity of the jump sequence by evaluating stability through the Ramanujan inner product, which correlates the state energy with arithmetic progressions. Sampling the state at prime instants $t_j \in \mathbb{P},$ and defining $V(x)=\frac{1}{2}(x_1^2+x_2^2)$ as a Ramanujan-Lyapunov function, one obtains inequalities of the form $\langle \nabla V(x),f(x)\rangle_R \leq 0$ for flows and $V(g(x))-V(x) \leq -\rho V(x)$ for jumps, with the arithmetic weighting of the Ramanujan sum $c_q(p_j-p_i)$ ensuring that the averaged dissipation over prime-indexed progressions remains sufficient to drive $V(x)\rightarrow 0.$ Consequently, the sequence $\|x^{[j]}\|_R$ converges to zero, establishing asymptotic Ramanujan stability and demonstrating that the proposed criterion provides a less conservative and more accurate characterization of stability than Euclidean analysis.
\subsubsection{Simulation Results}
\begin{figure}[htbp]
\centering
\includegraphics[height=1.85in,width=3.5in]{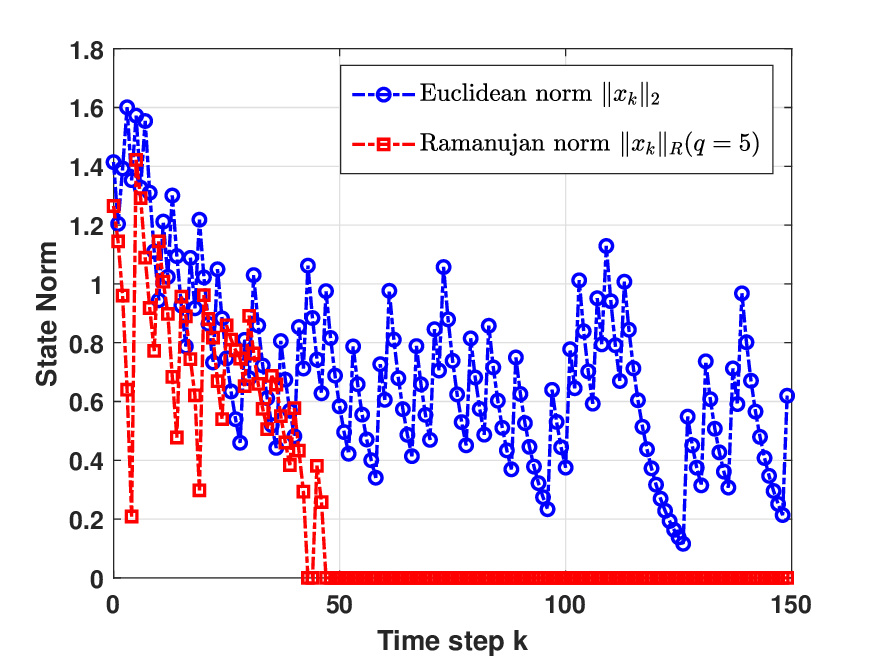}
\caption{Norm of state trajectories of system \eqref{hbd} showing evolution of Euclidean and Ramanujan norm.}
\label{fig:trajectories1}
\end{figure}Simulation is performed for the hybrid system \eqref{hbd}, with initial condition $x_0=(1,1)^\top$, and modulus $q=5.$ Fig.~\ref{fig:trajectories1} shows the evolution of the Euclidean norm $\|x(t)\|_2$ and $\|x(t)\|_R$, and vertical dashed lines indicate jump instants. The Euclidean norm displays noticeable fluctuations and persistent spikes near these jump instants, illustrating slow energy dissipation between sparse prime-triggered events. In contrast, the Ramanujan norm exhibits a smooth monotonic decay over time, confirming asymptotic convergence and demonstrating that the Ramanujan-based analysis effectively captures arithmetic sparsity in jump occurrences, providing a more accurate and less conservative stability characterization than the Euclidean approach.
\subsection{Discussion} Consider a discrete-time stable system $x_{k+1}=Ax_{k}+w_k$, where $w_k=d,$ if $k\equiv r~(\text{mod~m}),$ and $w_k=0,$ otherwise. In the Euclidean framework, the system behaves as an LTI system driven by a periodic input, leading to a steady-state oscillatory response where $\limsup_{k\rightarrow \infty}{\|x_k\|_2}> 0,$ thus classical ISS or $\ell_\infty$ gain analysis concludes boundedness but not asymptotic stability. In contrast, the Ramanujan framework exploits the arithmetic structure of the disturbance. When the chosen Ramanujan modulus $q$ is coprime to $m,$ the Ramanujan sum $c_q(r+mj)$ remains nonzero, and the Ramanujan norm captures the full disturbance energy, reproducing the boundedness behavior consistent with Euclidean analysis. However, when $q$ shares factors with $m,$ particularly for $q=m,$ then $c_m(r+mj)=\mu(m),$ where $\mu(m)$ is the Mobius function. If $\mu(m)=0,$ the disturbance becomes orthogonal to the chosen Ramanujan basis, yielding $\|w_k\|_R=0$ and ensuring asymptotic stability in the Ramanujan norm. Hence, by tuning $q$ one can selectively filter out structured arithmetic disturbances, which demonstrates that the Ramanujan approach provides a refined, non-conservative, and structure-aware stability certification far beyond the capability of traditional Euclidean analysis.

\section{Conclusion}
\label{sec:conclusion}

This paper has introduced a novel stability framework using Ramanujan inner products that provides enhanced stability criteria for hybrid and discrete-time systems. The proposed approach leverages number-theoretic concepts to capture stability properties that are not apparent through traditional Euclidean methods. The main theorems establish rigorous $\epsilon$-$\delta$ stability conditions with improved robustness properties, particularly for systems with arithmetic characteristics in their dynamics.

Future work will focus on computational methods for evaluating Ramanujan norms, applications to networked control systems, and extensions to stochastic hybrid systems.

\section*{Disclosure statement}
No potential conflict of interest was reported by the author(s).

\section*{Data Availability Statement}
Data sharing is not applicable to this article as no new data were created or analyzed in this study.
\bibliographystyle{tfq}
\bibliography{interactapasample}

@book{khalil2002nonlinear,
  title={Nonlinear systems},
  author={Khalil, Hassan K and Grizzle, Jessy W},
  volume={3},
  year={2002},
  publisher={Prentice hall Upper Saddle River, NJ}
}

@article{goebel2009hybrid,
  title={Hybrid dynamical systems},
  author={Goebel, Rafal and Sanfelice, Ricardo G and Teel, Andrew R},
  journal={IEEE control systems magazine},
  volume={29},
  number={2},
  pages={28--93},
  year={2009},
  publisher={IEEE}
}

@article{hardy1918asymptotic,
  title={Asymptotic formula{\ae} in combinatory analysis},
  author={Hardy, Godfrey H and Ramanujan, Srinivasa},
  journal={Proceedings of the London Mathematical Society},
  volume={2},
  number={1},
  pages={75--115},
  year={1918},
  publisher={Wiley Online Library}
}

@book{apostol2013introduction,
  title={Introduction to analytic number theory},
  author={Apostol, Tom M},
  year={2013},
  publisher={Springer Science \& Business Media}
}

@article{planat2002ramanujan,
  title={Ramanujan sums for signal processing of low-frequency noise},
  author={Planat, Michel and Rosu, Haret and Perrine, Serge},
  journal={Physical Review E},
  volume={66},
  number={5},
  pages={056128},
  year={2002},
  publisher={APS}
}

@article{ramanujan1918certain,
  title={On certain trigonometrical sums and their applications in the theory of numbers},
  author={Ramanujan, Srinivasa},
  journal={Trans. Cambridge Philos. Soc},
  volume={22},
  number={13},
  pages={259--276},
  year={1918}
}

@incollection{haddad2011nonlinear,
  title={Nonlinear dynamical systems and control: a Lyapunov-based approach},
  author={Haddad, Wassim M and Chellaboina, VijaySekhar},
  booktitle={Nonlinear dynamical systems and control},
  year={2011},
  publisher={Princeton university press}
}

@article{nicol1962some,
  title={Some formulas involving Ramanujan sums},
  author={Nicol, CA},
  journal={Canadian Journal of Mathematics},
  volume={14},
  pages={284--286},
  year={1962},
  publisher={Cambridge University Press}
}

@article{Zhai2001,
  author    = {Zhai, Guisheng and Hu, B. and Yasuda, K. and Michel, Anthony N.},
  title     = {Stability analysis of switched systems with stable and unstable subsystems: An average dwell time approach},
  journal   = {International Journal of Systems Science},
  volume    = {32},
  number    = {8},
  pages     = {1055--1061},
  year      = {2001},
}

@article{Xu2008,
  author    = {Xu, Shengyuan and Lam, James},
  title     = {A survey of linear matrix inequality techniques in stability analysis of delay systems},
  journal   = {International Journal of Systems Science},
  volume    = {39},
  number    = {12},
  pages     = {1095--1113},
  year      = {2008},
}

@article{Herrmann2006,
  author    = {Herrmann, Guido and Turner, Matthew C. and Postlethwaite, Ian},
  title     = {Discrete-time and sampled-data anti-windup synthesis: Stability and performance},
  journal   = {International Journal of Systems Science},
  volume    = {37},
  number    = {2},
  pages     = {91--113},
  year      = {2006},
}
\section*{Biographical note}
\textbf{Shyam Kamal} received his Bachelor’s degree in Electronics and Communication Engineering from the
 Gurukula Kangri Vishwavidyalaya Haridwar, Uttrakhand, India in 2009, and Ph.D. in Systems and Control Engineering from the IIT Bombay, India in 2014.  He was with the Department of Systems Design and Informatics, Kyushu Institute of Technology, Japan as a Project Assistant Professor, from 2014 to 2016. Currently, he is an associate professor at the Department of Electrical Engineering, IIT (BHU), Varanasi, India. His research spans discrete and continuous higher-order sliding mode control, fractional-order systems, and contraction analysis.
 
\textbf{Sunidhi Pandey} completed her B.Tech degree in Electrical Engineering from KNIT Sultanpur, UP, India, and her M. Tech degree in Power System Engineering from MNNIT Allahabad, India in 2018. She completed her Ph.D. degree in Control System Engineering from the IIT (BHU), Varanasi, India, in 2025. Her research area is nonlinear control, passivity-based control, continuous-time optimization techniques, and sliding mode control.

\textbf{Thach Ngoc
Dinh} received the M.Sc.Res. degree in Automatic Control (with honors) and the Diplôme d’Ingénieur (Master’s degree) in electrical engineering from INSA Lyon, France, in 2011, and the Ph.D. degree in physics from Paris-Saclay University, France, in 2014. His doctoral work was funded by INRIA and conducted at the CAOR Laboratory of Mines Paris–PSL and the L2S Laboratory UMR 8506 of CentraleSupelec. From 2015 to 2016, he was a JSPS Postdoctoral Fellow at the Kyushu Institute of Technology, Japan. He subsequently held a temporary Assistant Professor position at the Polytechnic University of Hauts-de-France, France, from 2016 to 2017. Since September 2017, he has been a tenured Associate Professor at the Conservatoire National des Arts et Métiers (CNAM), France. He was appointed as an Adjunct Professor at Vietnam National University, Ho Chi Minh City (VNUHCM), Vietnam, for the period 2025–2027 under the excellence program VNU350. He has also held visiting positions at the Kyushu Institute of Technology (Japan, February–March, 2019), the Hanse-Wissenschaftskolleg—Institute for Advanced Study and the Carl von Ossietzky University of Oldenburg (Germany, February–May, 2025), Harbin Institute of Technology (China, June–July, 2025), and Vietnam Institute for Advanced Study in Mathematics (Vietnam). His research interests include observer design and robust control. He has served as an Associate Editor for the IEEE Control Systems Letters and the Journal of the Franklin Institute, as an Editor for Nonlinear Dynamics, and as an Editorial Board Member or Guest Editor for several other scholarly journals. He is also a member of the IFAC Technical Committee on Linear Control Systems. Dr. Dinh received the JSPS Postdoctoral Fellowship for North American and European Researchers in 2015 and the Hanse-Wissenschaftskolleg Regular Fellowship in 2023.

\textbf{Cao Thanh Tinh} received the BSc degree in Mathematics from Can Tho University in 2003, the MSc degree in Mathematics from Hanoi National University of Education in 2008, and the PhD degree in Mathematics from Vietnam National University - HCMC, University of Science, Vietnam in 2017. He is currently a senior lecturer in Vietnam National University - HCMC, University of Information Technology. His research interests are in differential equations, robust control, neutral networks, singular systems.


\end{document}